\newcommand{\etal}{\textit{et al.}\xspace}
\newenvironment{lemma-repeat}[1]{\begin{trivlist}
		\item[\hspace{\labelsep}{\bf\noindent Lemma \ref{#1} }]\em }%
	{\end{trivlist}}
\newenvironment{theorem-repeat}[1]{\begin{trivlist}
		\item[\hspace{\labelsep}{\bf\noindent Theorem \ref{#1} }]\em }%
	{\end{trivlist}}
\newcommand*\samethanks[1][\value{footnote}]{\footnotemark[#1]}
\newcommand{\Deal}{request}
\newcommand{\Vault}{vault}
\newcommand{\Bank}{bank}
\newcommand{\Budget}{budget}
\newcommand{\InCover}{\texttt{InCover}}
\newcommand{\NotInCover}{\texttt{NotInCover}}
\newcommand{\ceil}[1]{\left\lceil #1 \right\rceil}
\newcommand{\floor}[1]{\left\lfloor #1 \right\rfloor}
\newcommand{\parentheses}[1]{\left(#1\right)}
\newcommand{\set}[1]{\left\{#1\right\}}
\newcommand{\Dv}{\ensuremath{\Delta}}
\newcommand{\Kv}{\ensuremath{K}}
\newcommand{\eps}{\epsilon}
\newcommand\footnoteref[1]{\protected@xdef\@thefnmark{\ref{#1}}\@footnotemark}
\newtheorem{theorem}{Theorem}
\newtheorem{claim}{Claim}
\newtheorem{lemma}{Lemma}
\begin{document}

\begin{titlepage}
	\title{A Deterministic Distributed $2$-Approximation for Weighted Vertex Cover in $O(\log n\log\Delta / \log^2\log\Delta)$ Rounds}
	\author{Ran Ben-Basat\thanks{Technion, Department of Computer Science, \texttt{sran@cs.technion.ac.il}. This work was partially sponsored by the Technion-HPI reseach school.}
		\and Guy Even \thanks{Tel Aviv University, \texttt{guy@eng.tau.ac.il}.} \and Ken-ichi Kawarabayashi \thanks{NII, Japan, \texttt{k\_keniti@nii.ac.jp}, \texttt{greg@nii.ac.jp }. This work was supported by JST ERATO Grant Number JPMJER1201, Japan.} \and Gregory Schwartzman\samethanks}
	\maketitle

        \begin{abstract}
          We present a deterministic distributed $2$-approximation
          algorithm for the Minimum Weight Vertex Cover problem in the
          CONGEST model whose round complexity is $O(\log n\log\Delta
          / \log^2\log\Delta)$. This improves over the currently best
          known deterministic 2-approximation implied by
          \cite{KhullerVY94}. 
          Our solution generalizes the $(2+\epsilon)$-approximation algorithm of \cite{Bar-YehudaCS17}, improving the dependency on $\epsilon^{-1}$ from linear to logarithmic.  
          In addition, for every $\eps=(\log \Delta)^{-c}$, where $c\geq 1$ is a constant, our algorithm computes a $\parentheses{2+\eps}$-approximation
          in $O(\log{\Delta}/\log\log{\Delta})$~rounds (which is
          asymptotically optimal).
	\end{abstract}

	\thispagestyle{empty}
\end{titlepage}


\section{Introduction}
The Minimum Weight Vertex Cover Problem (MWVC) is defined as
follows. The input is a graph $G=(V,E)$ with nonnegative vertex
weights $w(v)$. A subset $U\subseteq V$ is a \emph{vertex cover} if,
for every edge $e=\{u,v\}$, the intersection $U\cap \{u,v\}$ is not
empty. The weight of a subset of vertices $U$ is $\sum_{v\in
  U}w(v)$. The goal is to find a minimum weight vertex cover. This
problem is one of the classical NP-hard problems~\cite{Karp72}. 

In this paper we deal with distributed deterministic approximation
algorithms for MWVC. We focus on the CONGEST model of distributed
computation in which the communication network is the graph $G$
itself.\footnote{In the CONGEST model vertices have distinct IDs (that
  are polynomial in $|V|$), however, as in~\cite{Bar-YehudaCS17}, our
  algorithm works also in the case of anonymous vertices.}
Computation proceeds in synchronous rounds. Each round consists of
three parts: each vertex receives messages from its neighbors,
performs a local computation, and sends messages to its neighbors. The sent messages
arrive at their destination in the beginning of the next round. In the CONGEST model,
message lengths are bounded by $O(\log |V|)$.  In order to send 
vertex weights, we assume that all the vertex weights are
positive integers bounded by polynomial in $n\triangleq |V|$.  See~
\cite{Bar-YehudaCS17,AstrandS10} for detailed overviews of distributed
algorithms for MWVC.

Let $\Delta$ denote the maximum vertex degree in $G$.  Two of the most
relevant results in this setting to our paper are the lower bound
of~\cite{KuhnMW16} and the upper bound of~\cite{Bar-YehudaCS17}.  The
lower bound of Kuhn \etal~\cite{KuhnMW16} states that every constant
approximation algorithm for MWVC requires at least $\Omega(\log \Delta/\log
\log \Delta)$ rounds of communication. The upper bound of Bar-Yehuda
\etal~\cite{Bar-YehudaCS17} presents a deterministic distributed $(2+\eps)$-approximation 
algorithm (BCS Algorithm) that requires $O(\log \Delta/(\eps \cdot \log \log \Delta))$
rounds for $\eps\in(0,1)$. For
$\epsilon =\Omega(\log \log \Delta / \log \Delta)$, the running time
is $O(\log{\Delta}/\log\log{\Delta})$, with no dependence on
$\epsilon$, and is optimal according to \cite{KuhnMW16}.

In this paper, we present a generalization of the BCS Algorithm with
improved guarantees on the running time for certain ranges of
$\epsilon$.  We focus on decreasing the dependency of the number of
rounds on $\eps$. Since the round complexity of the BCS Algorithm is
optimal for constant values of $\eps$ (and even $\eps=\Omega(\log
\log \Delta / \log \Delta)$), we consider values of $\eps$ that depend on $\Delta$. 

Our main result\footnote{All logarithms
  are base $2$ unless the basis is written explicitly.} is a deterministic distributed
$(2+\eps)$-approximation algorithm in which the number of rounds is
bounded by
\begin{align*}
  O\parentheses{\frac{\log \Dv}{\log\log \Dv} +
  {\frac{\log\epsilon^{-1}\log \Dv}{\log^2 \log \Dv}}}.
\end{align*}
This result assumes that all the vertices know $\Delta$ or an estimate
that is a polynomial in $\Delta$. This result leads to the following consequences:
\begin{enumerate}
\item If $\eps^{-1}=(\log \Delta)^c$, for a constant $c>0$, then the
  number of rounds asymptotically matches the lower bound, and is thus
  optimal.  In~\cite{Bar-YehudaCS17} the same asymptotic running time is
  guaranteed only for $\eps^{-1}=O(\log\Delta/\log \log \Delta)$.
\item If $\eps^{-1}= (\log \Delta)^{\omega(1)}$, then
  the dependency of the round complexity on $1/\eps$ is reduced from
  linear to logarithmic. In addition, the round complexity is decreased by an additional factor of $\log\log \Delta$.
\item Every $(2+\eps)$-approximation is a $2$-approximation if
  $\eps<1/(nW)$, where $W=\max_v w(v)$. Since we assume that
  $W=n^{O(1)}$, where $n=|V|$, we obtain a $2$-approximate
  deterministic distributed algorithm for MWVC with round complexity
  $O(\log n \cdot \log \Delta/ \log^2 \log \Delta)$.  This improves over the
  2-approximation in $O(\log^2 n)$ rounds implied by
  \cite{KhullerVY94}\footnote{The actual
    result is stated as a $(2+\epsilon)$-approximation in $O(\log
    \epsilon^{-1} \log n)$ rounds, from which we infer a
    2-approximation by setting $\epsilon=1/nW$.} (which has the lowest round complexity for
  deterministic $2$-approximation to the best of our knowledge).
\end{enumerate}

Our round complexity increases for the case that the maximum degree
$\Delta$ is unknown to the vertices of the graphs. We propose two
alternatives for the case that $\Delta$ is unknown.  The first alternative holds for every
$\eps\in(0,1)$, and achieves a $(2+\eps)$-approximate solution  with a
round complexity of
$O\parentheses{\frac{\log\epsilon^{-1}\log \Dv}{\log \log \Dv}}$.
The second alternative holds for $\eps>(\log \Delta)^q$, where $q>0$ is a constant.
In the second alternative, a $(2+\eps)$-approximation is achieved with an optimal asymptotic round complexity of
$O(\log \Delta/\log \log \Delta)$.

Our algorithm builds on the BCS Algorithm~\cite{Bar-YehudaCS17}.  This
algorithm adapts the \emph{local ratio}
framework~\cite{BarYehudaE1985} to the distributed setting, with
several improvements that provide the desired speedup. The BCS
Algorithm can be also interpreted as the following ``primal-dual'' 
algorithm. Essentially the algorithm aims to increase the edge
variables (i.e., {\it dual}) such that the following holds:
\begin{compactenum}
	\item  The sum of edge variables incident to every node does not exceed its weight ({\it feasibility of edge variables}).
	\item The set of vertices whose edge variable sum is at least
          $(1-\epsilon)$-fraction of the vertex weight constitute a
          {\it vertex cover}.
\end{compactenum}
The above conditions yield a $(2+\epsilon)$-approximation for MWVC.

The challenge in the above framework is to maintain feasibility of the
edge variables while converging as fast as possible to a vertex
cover. To increase the edge variables, vertices send \emph{offers} to
their neighbors. The neighbors respond to these offers in a way that
guarantees feasibility of the edge variables.  This requires a
coordination mechanism in the distributed setting, as a vertex both
sends and receives offers simultaneously. To this end, the weight of
every vertex is divided into two parts: \emph{vault} and
\emph{bank}. Offers are allocated from the vault, while responses are
allocated from the bank, respectively.  Hence the agreed upon increases to the edge
variables do not violate the feasibility of the edge
variables.  The BCS algorithm sets the vault to be an
$\epsilon$-fraction of the vertex weight (and the bank to be the
remainder). This leads to a running time of $O(\epsilon^{-1} \log
\Delta / \log \log \Delta)$ and $O(\log \Delta / \log \log \Delta)$ if
$\epsilon = \Omega( \log \log \Delta / \log \Delta)$.

Our algorithm introduces three modifications to the BCS Algorithm, which allows us to 
improve the round complexity.  First, we attach \emph{levels} to the vertices
that measure by how much the remaining weight of a vertex has
decreased.  Second, the size of the vault decreases as the level of
the vertex increases.  Third, offers are not sent to all the
neighbors. Instead,  offers are sent only to the neighbors whose level
is the smallest level among the remaining neighbors.  

\paragraph{Related Work}
An excellent overview of the related work is presented in
\cite{Bar-YehudaCS17,AstrandS10} which we summarize hereinafter.
Minimum vertex cover is one of Karp's 21 NP-hard
problems~\cite{Karp72}.  A simple 2-approximation for the unweighted
version can be achieved by a reduction from maximal matching (see,
e.g.,~\cite{Cormen2009,GareyJ79}).  For the weighted case,
\cite{BarYehudaE81} achieves the first linear-time $2$-approximation
algorithm using the primal-dual schema, while \cite{BarYehudaE1985}
achieves the same result using the local-ratio technique. Prior to
that, the first polynomial-time $2$-approximation algorithm was due
to~\cite{NemhauserT75} and observed by~\cite{Hochbaum82}. For any
constant $\epsilon>0$, if the Unique Games conjecture holds, no
polynomial-time algorithm can compute a $(2-\epsilon)$ approximation
of the minimum vertex cover~\cite{KhotR08}.

Let us now turn our attention to the distributed setting. Let us start
from the unweighted case. A $2$-approximation can be found in
$O(\log^4{n})$ rounds~\cite{HanckowiakKP01} and in
$O(\Delta+\log^{*}{n})$ rounds~\cite{PanconesiR01}. Completely local
algorithms with no dependence on $n$ are presented
in~\cite{AstrandFPRSU09} which gives an $O(\Delta^2)$-round
$2$-approximation algorithm, and in~\cite{PolishchukS09} which gives
an $O(\Delta)$-round $3$-approximation algorithm. Using the maximal
matching algorithm of~\cite{BarenboimEPS12} gives a $2$-approximation
algorithm for vertex cover in $O(\log{\Delta}+(\log\log{n})^4)$
rounds. This can be made into a
$(2+1/\text{poly}{\Delta})$-approximation algorithm within
$O(\log{\Delta})$ rounds~\cite{PettiePersonal}.

For the weighted case, \cite{GrandoniKP08} presents a randomized
$2$-approximation algorithm in $O(\log{n}+\log{W})$ rounds (where $W$
is a bound on the vertex weights). In~\cite{KoufogiannakisY09} the
first (randomized) $2$-approximation algorithm running in $O(\log{n})$
rounds is presented (note that the running time is logarithmic in $n$
and independent of the weights). A deterministic $2$-approximation
algorithm in $O(\Delta+\log^{*}{n})$ rounds is given
within~\cite{PanconesiR01}. In~\cite{KhullerVY94}, a deterministic
$(2+\epsilon)$-approximation algorithm is given within
$O(\log{\epsilon^{-1}}\log{n})$ rounds. As for deterministic
algorithms independent of $n$,~\cite{KuhnMW06} presents a
$(2+\epsilon)$-approximation algorithm in
$O(\epsilon^{-4}\log{\Delta})$ rounds and~\cite{AstrandFPRSU09}
presents a $2$-approximation algorithm in $O(1)$ rounds for
$\Delta \leq 3$, while~\cite{AstrandS10} presents a $2$-approximation
algorithm in $O(\Delta+\log^* W)$ rounds (where
$W\triangleq \max_v w(v)$). Finally in \cite{Bar-YehudaCS17} a
deterministic $(2+\epsilon)$-approximation which runs in
$O(\epsilon \log \Delta / \log \log \Delta)$ rounds is given. In
\cite{Solomon18} a $(2+\epsilon)$-approximation in
$O(\epsilon^{-1} \log (\alpha / \epsilon) / \log \log (\alpha /
\epsilon))$ rounds for graphs of arboricity bounded by $\alpha$.

As the result of \cite{Solomon18} uses the algorithm of \cite{Bar-YehudaCS17} as a black box, plugging $\Delta = \alpha / \epsilon$, our results can also be used. This means all of results stated in this paper also hold for bounded arboricity graphs setting $\Delta = \alpha / \epsilon$.
We list the previous results and the results of this paper in Table~\ref{tbl} (Adapted from \cite{AstrandS10}).

\begin{table*}[t!]
	\centering{
		\resizebox{1.0 \textwidth}{!}{
			\begin{tabular}{|cclll|}
				\hline
				deterministic& weighted& approximation& time $(W=1)$& algorithm\\
				\hline
				no& yes& 2&$O(\log n)$& \cite{GrandoniKP08}\\
				
				no& yes& 2&$O(\log n)$& \cite{Koufogiannakis:2009:DPA:1582716.1582746}\\
				
				yes& no& 3&$O(\Delta)$& \cite{PolishchukS09}\\
				
				yes& no& 2&$O (\log^{4}n)$& \cite{HanckowiakKP01}\\
				
				
				yes& no& 2&$O(\Delta^{2})$& \cite{AstrandFPRSU09}\\
				
				yes& yes& $ 2+\epsilon$&$O(\log\epsilon^{-1}\log n)$& \cite{KhullerVY94}\\
				
				yes& yes& $ 2$&$O(\log^2 n)$&\cite{KhullerVY94}\\
				
				yes& yes& $ 2+\epsilon$&$O(\epsilon^{-4}\log\Delta)$& \cite{Hochbaum82, KuhnMW06}\\
				
				yes& yes& 2&$O(\Delta+\log^{*}n)$& \cite{PanconesiR01}\\
				
				
				yes& yes& 2&$O(\Delta)$& \cite{AstrandS10}\\
				
				yes& yes& 2&$O(1)$ for $\Delta\le 3$& \cite{AstrandFPRSU09}\\
				
				yes& yes& $ 2+\epsilon$&$O(\epsilon^{-1}\log\Delta/\log\log\Delta)$& \cite{Bar-YehudaCS17}\\	
				
				yes& yes& $ 2+\frac{\log\log\Delta}{\log\Delta}$&$O(\log\Delta/\log\log\Delta)$& \cite{Bar-YehudaCS17}\\								
				yes& yes& ${2+\epsilon}$&$O\parentheses{\frac{\log \Dv}{\log\log \Dv} + {\frac{\log\epsilon^{-1}\log \Dv}{\log^2 \log \Dv}}}$& This work\\	
				
				yes& yes& ${{2+\parentheses{\log\Delta}^{-c}}}$&$O(\log\Delta/\log\log\Delta)$& This work, $\forall c=O(1)$\\								
				yes& yes& 2&$O(\log n\log{\Delta}/\log^2\log{\Delta})$& This work\\
				
				\hline
			\end{tabular}
		}
	}
	\caption{
		In the table (adapted from \cite{AstrandS10}), $n=|V|$ and $\epsilon\in(0,1)$. The running times are stated for the case of unit weight vertices. For randomized algorithms the running times hold in expectation or with high~probability.
	}
	\label{tbl}
\end{table*}

\section{The MWVC local ratio template}
\label{sec:LR}
In this section we overview \cite{Bar-YehudaCS17}'s local ratio
paradigm for approximating MWVC.  We note that the template does not
assume anything about the model of computation and that our algorithms
will fit into this framework. This template can also be viewed via the
primal-dual schema.

Let  $G=(V,E)$ denote a graph with a vertex-weight function $w: V
\rightarrow \mathbb{R}^{+}$. An edge-weight function $\delta: E
\rightarrow \mathbb{R}^{+}$ is \emph{$G$-valid} if for every vertex
$v$ the incident edges weight sum does not exceed $w(v)$; that is,
$\delta$ is $G$-valid if $\forall v \in V:\sum_{v \ni e}{\delta(e)}
\leq w(v)$. (In fact, a $G$-valid function $\delta$ is a feasible
solution to the dual edge packing LP.)

Next, for a $G$-valid function $\delta$, define the vertex-weight
function $\tilde{w}_{\delta}: V \rightarrow \mathbb{R}^{+}$ by
$\tilde{w}_{\delta}(v) = \sum_{e: v \in e}{\delta(e)}$.  Let
$S_{\delta}=\{v \in V ~|~ w(v)-\tilde{w}_{\delta}(v) \leq \epsilon'
w(v) \}$ be the set of vertices for which $w$ and $\tilde{w}$ differ
by at most $\epsilon' w(v)$, for $\epsilon'= \epsilon/(2+\epsilon)$.
We refer to vertices in $S_\delta$ as $\eps'$-\emph{tight}
vertices. The essence of the template consists of two parts: 
(1)~The sum of the weights of the vertices in $S_\delta$ is at most
$(2+\eps)$ times the weight of an optimal solution to MWVC.  
(2)~When the algorithm terminates, $S_\delta$ is a vertex cover. 
\begin{theorem}\emph{(\cite{Bar-YehudaCS17})}
  \label{thm:lr}
  Fix $\epsilon>0$ and let $\delta$ be a $G$-valid function. Let $OPT$
  be the sum of weights of vertices in a minimum weight vertex cover
  $S_{OPT}$ of $G$. Then $\sum_{v \in S_{\delta}}{w(v)} \leq
  (2+\epsilon)OPT$. In particular, if $S_{\delta}$ is a vertex cover,
  then it is a $(2+\epsilon)$-approximation for MWVC for $G$.
\end{theorem}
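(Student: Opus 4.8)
The plan is to run the standard local-ratio / weak-duality argument in three short moves: (i) turn $\eps'$-tightness into a multiplicative lower bound on $\tilde w_\delta(v)$ in terms of $w(v)$; (ii) sum this over $S_\delta$, using that each edge is incident to at most two vertices; and (iii) bound the total edge weight $\sum_{e\in E}\delta(e)$ from above by $OPT$ via $G$-validity together with the fact that $S_{OPT}$ covers every edge. For the first move, fix $v\in S_\delta$. By definition $w(v)-\tilde w_\delta(v)\le \eps' w(v)$, i.e. $\tilde w_\delta(v)\ge (1-\eps')w(v)$, and since $\eps'=\eps/(2+\eps)$ we have $1-\eps'=2/(2+\eps)$, so $w(v)\le \tfrac{2+\eps}{2}\,\tilde w_\delta(v)$. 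Summing over $v\in S_\delta$ and expanding $\tilde w_\delta(v)=\sum_{e\ni v}\delta(e)$ gives
\[
\sum_{v\in S_\delta}w(v)\;\le\;\frac{2+\eps}{2}\sum_{v\in S_\delta}\sum_{e\ni v}\delta(e)\;\le\;\frac{2+\eps}{2}\cdot 2\sum_{e\in E}\delta(e)\;=\;(2+\eps)\sum_{e\in E}\delta(e),
\]
where the middle inequality holds because $\delta(e)$ is counted at most twice (once per endpoint of $e$) in $\sum_{v\in S_\delta}\sum_{e\ni v}\delta(e)$, and all terms are nonnegative.

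Next I would bound $\sum_{e\in E}\delta(e)$ by $OPT$; this is just weak duality for the edge-packing LP, but it is equally easy to see directly. Since $S_{OPT}$ is a vertex cover, every edge has at least one endpoint in $S_{OPT}$, hence $\sum_{e\in E}\delta(e)\le\sum_{v\in S_{OPT}}\sum_{e\ni v}\delta(e)$ (an edge with both endpoints in $S_{OPT}$ is merely over-counted, which only helps). Applying $G$-validity, $\sum_{e\ni v}\delta(e)\le w(v)$ for every $v$, so $\sum_{e\in E}\delta(e)\le\sum_{v\in S_{OPT}}w(v)=OPT$. Combining with the previous display yields $\sum_{v\in S_\delta}w(v)\le(2+\eps)OPT$, which is the main claim. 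The ``in particular'' part is then immediate: if the algorithm terminates with $S_\delta$ a vertex cover, then $S_\delta$ is a feasible solution whose weight is within a factor $2+\eps$ of the optimum, i.e.\ a $(2+\eps)$-approximation.

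I do not expect a genuine obstacle here — the argument is entirely elementary. The only points requiring care are the algebra relating $\eps'$ to $\eps$ (one must get $1-\eps'=2/(2+\eps)$ exactly, so that the factors $\tfrac{2+\eps}{2}$ and $2$ multiply to precisely $2+\eps$), and keeping the two double-counting steps pointed in opposite directions: for the sum over $S_\delta$ we need ``each edge counted \emph{at most} twice'', whereas for the $OPT$ bound we need ``each edge counted \emph{at least} once'', and the latter is exactly where the vertex-cover property of $S_{OPT}$ is used.
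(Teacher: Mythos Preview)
Your proof is correct and is exactly the standard local-ratio/weak-duality argument one would expect. Note that the paper does not actually supply its own proof of this theorem: it is stated as a cited result from \cite{Bar-YehudaCS17} and used as a black box, so there is no in-paper argument to compare against. Your write-up would serve as a perfectly good self-contained proof of the statement.
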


\section{A fast distributed implementation}
\label{sec:dist}
In this section, we present a modification of the distributed
algorithm for MWVC of Bar-Yehuda \etal~\cite{Bar-YehudaCS17}. The pseudo-code for our algorithm is given in Algorithm~\ref{alg}. In this
section we assume that the maximal degree $\Delta$ is known to all
vertices.~\footnote{A polynomial upper bound of $\Delta^{O(1)}$ would
  yield the same asymptotic bound on the number of rounds.}  In
Section~\ref{sec:noDelta} we provide an algorithm with a slightly
higher running time in which this assumption is lifted.  


%
%
For clarity of presentation, we first describe an implementation for the LOCAL model. This algorithm can be easily adapted to the CONGEST model using the techniques of \cite{Bar-YehudaCS17}.

\paragraph{Overview of Algorithm~\ref{alg}} 
The algorithm  uses the following parameters:
(i)~$\epsilon' \triangleq \epsilon/(2+\epsilon)$, (ii)~$\gamma\in (0,1)$, (iii)~$z\triangleq \ceil{\log_\gamma\epsilon'}$. 
The parameter $\eps'$ is used for defining tightness of the dual packing constraint.
The parameter $\gamma$ is used for defining levels. Loosely speaking, in every iteration the weight of a vertex is reduced, and the level of a vertex is proportional to $\log_\gamma (w_i(v)/w_0(v))$. 
The parameter $z$ is used to bound the number of levels till a vertex becomes $\eps'$-tight, meaning that $w_i(v)/w_0(v) \leq \eps'$.

Our algorithm, listed as Algorithm~\ref{alg}, is a variation of the Algorithm of Bar-Yehuda
\etal~\cite{Bar-YehudaCS17} with a few modifications.  We begin with a description of the common features.
In the course
of the algorithm, the weight of each vertex is reduced. Once a vertex
$v$ becomes $\epsilon'$-tight (i.e., the reduced weight is an
$\epsilon'$-fraction of its original weight) it decides to join the
vertex cover and terminates after sending the message $(v,cover)$ to its
remaining neighbors. The message $(v,cover)$ causes the neighbors of $v$
to erase $v$ from their list of remaining neighbors. If a vertex $v$
loses all its neighbors (i.e., becomes isolated), it decides that $v$ is not in the vertex cover, and terminates. Upon termination, the
$\epsilon'$-tight vertices constitute a vertex cover.

The handling of offers is as in~\cite{Bar-YehudaCS17}. Vertex $v$
sends an irrevocable offer $\Deal_i(v,u)$ to every $u\in N_i(v)$.
The offers are allocated from the vault. The responses to the offers
are allocated greedily from the bank, namely $v$'s responses satisfy:
$\Budget_i(v,u)\leq \Deal_i(u,v)$ and $\sum_{u} \Budget_i(v,u)\leq
\Bank_i(v)$. The updating of the weights can be interpreted as
follows. For every edge $e=\{u,v\}$ the dual edge packing variable
$\delta(e)$ is increased by $\Budget_i(u,v)+\Budget_i(v,u)$. The
remaining weight satisfies $w_{i+1}(v) =w_0(v)-\sum_{e\ni v}
\delta(e)$. 
Note that each iteration of the while-loop requires a constant number of communication rounds.

The first modification is that we attach a \emph{level} to each vertex as follows. 
Let $w_{i}(v)$ denote the weight of $v$ in the beginning of iteration $i$ of the while-loop.
The level of $v$ in iteration $i$ satisfies $\ell_i(v)= 1+\floor{\log_{\gamma} \frac{w_{i}(v)}{w_0(v)}}$.
Note that the initial level is one, and that if the level of $v$ is greater than $z$, then $v$ is $\eps'$-tight (see Claim~\ref{claim:inv}). 

The second modification is how we partition $w_i(v)$ between the vault and the bank.
Instead of using a fixed fraction of the initial weight for the vault, our vault decreases as the level of the vertex increases.
Formally, $\Vault_i(v)\triangleq w_0(v)\cdot \gamma^{\ell_i(v)}$. The bank is the rest of the weight, namely, 
$\Bank_i(v) \triangleq w_i(v) - \Vault_i(v)$.

The third modification is that in each iteration, every vertex $v$
only sends offers to its remaining neighbors with the smallest
level. Let $N_i(v)$ denote the set of remaining neighbors of $v$ in the beginning of the $i$th iteration. 
The smallest level of the neighbors of $v$ is defined by $\ell'_i(v)\triangleq \min \{\ell_i(u) \mid u\in N_i(v)\}$.
The set of neighbors of lowest level is defined by $N'_i(v) \triangleq \{u\in N_i(v) \mid \ell_i(u)=\ell'_i(v)\}$.
Let $d'_i(v)=|N'_i(v)|$. The size of each offer sent is $\Vault_i(v)/d'_i(v)$.

Note that if $\gamma=\eps'$, then Algorithm~\ref{alg} reduces to the
BCS Algorithm because there is just one level, and the vault size is
fixed and equals to $\eps' \cdot w_0(v)$. On the other hand, if
$\gamma=1/2$, then there are $O(\log 1/\eps)$ levels. Per level
$\ell_i(v)$, the algorithm can be viewed as a version of the BCS algorithm with $\eps'=1/2^{\ell_i(v)}$. This also explains why our algorithm may be adapted to the CONGEST model of distributed computation using the techniques of \cite{Bar-YehudaCS17}. In essence they give an adaptation for a single level of our algorithm, which can easily be extended to multiple levels.

 \RestyleAlgo{boxruled} \LinesNumbered
\begin{algorithm}[htbp]
	\label{alg}
	\caption{A distributed $(2+\epsilon)$-approximation algorithm for MWVC, code for vertex~$v$. (Listing taken from~\cite{Bar-YehudaCS17} and edited to include our modifications.)}
        $\gamma = $ parameter in the interval $(0,1)$.\\
	$\epsilon'= \epsilon/(2+\epsilon)$\\
	$z= \ceil{\log_\gamma\epsilon'}$\\
	$w_0(v) = w(v)$\\
	$\ell_0(v) = 1$\\
	$N_0(v) = N(v), d_i(v) \triangleq |N_i(v)|$\\
	//Let $N'_i(v)$ be the set of neighbors of lowest level in iteration $i$ and $d'_i(v) \triangleq |N'_i(v)|$ \\ 
	$i=0$\\
	\While{$true$}
	{
		$\Vault_i(v) = w_0(v)\cdot \gamma^{\ell_i(v)}$\\
		$\Bank_i(v) = w_i(v) - \Vault_i(v)$\\
		$w_{i+1}(v) = w_i(v) $ and $\ell_{i+1}(v) \gets \ell_i(v)$\\
		\ForEach{$u \in N'_i(v)$}
		{		
			$\Deal_i(v,u) = \Vault_i(v)/d'_i(v)$\\
			Send $\Deal_i(v,u)$ to $u$\\
			Let $\Budget_i(u,v)$ be the response from $u$\\
			$w_{i+1}(v) = w_{i+1}(v) - \Budget_i(u,v)$\\
		}
		Let $u_1\dots u_{m_i}$ be an arbitrary order of neighbors that sent requests in this iteration\\
		\ForEach{$k=1,\dots,m_i$}
		{
			Let $\Deal_i(u_k,v)$ be received from $u_k \in N'_i(v)$\\
			$\Budget_i(v,u_k) = \min\{\Deal_i(u_k,v), \Bank_i(v)-\sum_{t=1}^{k-1}{\Budget_i(v,u_t)} \}$\\
			Send $\Budget_i(v,u_k)$ to $u_k$
		}
		$w_{i+1}(v) = w_{i+1}(v) - \sum_{k=1}^{d_i(v)}{\Budget_i(v,u_k)}$\\
		\If{$w_{i+1}(v) \neq 0$ and $w_{i+1}(v) \leq  \Vault_i(v)$}
		{
			$\ell_{i+1}(v) = 1+\floor{\log_{\gamma} \frac{w_{i+1}(v)}{w_0(v)}}$\\
		}		
		\If{$w_{i+1}(v) = 0$ or $\ell_{i+1}(v) \geq z+1$}
		{
            		Send $(v,cover)$ to all neighbors\\
            		Return $\InCover$\\
        		}
        		\ForEach{$(u,cover)$ received from $u \in N_i(v)$}
        		{
            		$N_i(v) = N_{i}(v)\setminus\{u\}$\\
        		}
        		\If{$d_i(v)=0$}
        		{
            		Return $\NotInCover$
        		}
                        $N_{i+1}(v) = N_i(v)$\\
                        $i=i+1$\\
                      }	
\end{algorithm}

We now state the main theorem of this work.

\begin{theorem}
\label{thm:alg}
 Algorithm~\ref{alg}
(with $\gamma=\frac{1}{\sqrt{\log \Delta}}$ if $\Delta> 16$ and $\gamma=0.5$ otherwise)
is a deterministic distributed
$(2+\epsilon)$-approximation algorithm for MWVC. The number of rounds required for the algorithm to terminate is
$O\parentheses{\frac{\log \Dv}{\log\log \Dv} +
  {\frac{\log\epsilon^{-1}\log \Dv}{\log^2 \log \Dv}}}$ if $\Delta> 16$ and $O(\log\epsilon^{-1})$ otherwise.
\end{theorem}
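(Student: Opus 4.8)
I would establish three invariants by induction on the iteration index $i$ and then apply Theorem~\ref{thm:lr}. First, the edge function $\delta$ obtained by accumulating $\Budget_i(u,v)+\Budget_i(v,u)$ on every edge $\{u,v\}$ stays $G$-valid: in iteration $i$ the offers $v$ sends total exactly $\Vault_i(v)$, hence the responses returning to $v$ total at most $\Vault_i(v)$, while the responses $v$ pays out are drawn greedily from $\Bank_i(v)$ and total at most $\Bank_i(v)$; therefore $w_{i+1}(v)\ge w_i(v)-\Vault_i(v)-\Bank_i(v)=0$, and since $w_{i+1}(v)=w_0(v)-\sum_{e\ni v}\delta(e)$ this is validity. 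This uses $\Bank_i(v)\ge0$, i.e.\ $\Vault_i(v)=w_0(v)\gamma^{\ell_i(v)}\le w_i(v)$, which is part of the level invariant of Claim~\ref{claim:inv}; that claim also gives that levels are non-decreasing and that a vertex of level $\ge z+1$ satisfies $w_i(v)\le\eps' w_0(v)$, so every vertex that returns \InCover{} --- because $w_{i+1}(v)=0$ or its level reached $z+1$ --- lies in $S_\delta$. Second, the \InCover{} vertices form a vertex cover: a vertex deletes a neighbor $u$ only after receiving $(u,cover)$, so an edge with neither endpoint in \InCover{} would keep both endpoints in each other's lists forever, contradicting termination (which itself follows from the round bound below). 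Thus $S_\delta$ contains a vertex cover, and Theorem~\ref{thm:lr} yields a $(2+\eps)$-approximation.

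\textbf{Round bound: number of levels.} Each while-loop iteration costs $O(1)$ communication rounds, so it suffices to bound the number of iterations performed by any vertex $v$; I would bound the iterations $v$ spends at a fixed level and sum over the levels it visits. Since a vertex reaching level $z+1$ halts, $v$ visits at most $z=\ceil{\log_\gamma\eps'}$ levels. For $\Delta>16$ and $\gamma=(\log\Delta)^{-1/2}$ we have $\log_\gamma\eps'=\Theta\!\parentheses{\log(1/\eps')/\log\log\Delta}$, so $v$ visits $O\!\parentheses{1+\log\eps^{-1}/\log\log\Delta}$ levels; for $\Delta\le16$ and $\gamma=1/2$ it visits $O(\log\eps^{-1})$ levels.

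\textbf{Round bound: iterations per level.} Fix a level $\ell$. While $v$ is at level $\ell$, $\Vault_i(v)=w_0(v)\gamma^\ell$ is constant and, restricted to the level-$\ell$ vertices, the algorithm resembles a run of the BCS algorithm (with vault fraction $\gamma^\ell$). Call an iteration of $v$ a \emph{weight-progress} iteration if $w_i(v)$ decreases by at least $w_0(v)\gamma^\ell$, and a \emph{structural} iteration otherwise. In a structural iteration $\sum_{u\in N'_i(v)}\Budget_i(u,v)<\Vault_i(v)$, so some lowest-level neighbor $u\in N'_i(v)$ does not fully honour $v$'s offer; this means $u$'s bank was exhausted, whence $w_{i+1}(u)\le\Vault_i(u)$ and $u$ advances a level (or halts). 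Since $v$ enters level $\ell$ with $w_i(v)\le w_0(v)\gamma^{\ell-1}$ and leaves once $w_i(v)\le w_0(v)\gamma^\ell$, there are at most $\gamma^{-1}$ weight-progress iterations at level $\ell$. The structural iterations at level $\ell$ I would bound by the degree-driven potential argument of Bar-Yehuda \etal, adapted to the multi-level algorithm: one tracks, for each vertex, a potential built from its degree to its currently lowest-level neighbors together with those neighbors' levels, shows it is non-increasing and drops by a super-constant factor in each structural iteration, and concludes that $O(\log\Delta/\log\log\Delta)$ structural iterations occur per level --- the same degree bound BCS obtain for a single phase. Hence $v$ performs $O\!\parentheses{\gamma^{-1}+\log\Delta/\log\log\Delta}$ iterations per level.

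\textbf{Combining, and the main obstacle.} For $\Delta>16$, $\gamma^{-1}=\sqrt{\log\Delta}=O(\log\Delta/\log\log\Delta)$ since $\log\log\Delta=O(\sqrt{\log\Delta})$, so each level costs $O(\log\Delta/\log\log\Delta)$ iterations; multiplying by $O\!\parentheses{1+\log\eps^{-1}/\log\log\Delta}$ levels gives $O\!\parentheses{\frac{\log\Delta}{\log\log\Delta}+\frac{\log\eps^{-1}\log\Delta}{\log^2\log\Delta}}$, and termination within this many rounds supplies the termination used above. For $\Delta\le16$, $\gamma^{-1}=2$ and the degree is at most $16$, so each level costs $O(1)$ iterations and the total is $O(\log\eps^{-1})$. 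The hard part is the per-level structural bound: porting the BCS potential argument to a setting where offers are sent only to the dynamically changing set $N'_i(v)$ of currently lowest-level neighbors --- so $v$ interacts with different neighbors over time and its relevant degree may even temporarily grow --- while keeping the structural cost from accumulating across the $\gamma^{-1}$ weight-progress phases of a level. The choice $\gamma=(\log\Delta)^{-1/2}$ is the balance point: $\gamma$ small enough to keep the number of levels at $O(\log\eps^{-1}/\log\log\Delta)$, yet large enough that the $\gamma^{-1}$ weight-progress iterations per level are dominated by the structural term $O(\log\Delta/\log\log\Delta)$.
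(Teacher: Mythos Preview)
Your overall plan---approximation via Theorem~\ref{thm:lr}, counting levels via $z$, then bounding iterations per level---matches the paper, and your computations of the level count and of the final round bound from a per-level estimate of $O(\gamma^{-1}+\log\Delta/\log\log\Delta)$ are correct. The gap is in how you obtain that per-level estimate.

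Your dichotomy places the threshold at a \emph{full} vault: an iteration is ``weight-progress'' if $w_i(v)-w_{i+1}(v)\ge \Vault_i(v)$, and ``structural'' otherwise. From $\sum_{u}\Budget_i(u,v)<\Vault_i(v)$ you correctly deduce that \emph{some} $u\in N'_i(v)$ failed to fully honour $v$'s offer and hence advances. But ``some'' means \emph{one}: with this definition you could have $\Theta(d(v))$ structural iterations at a fixed value of $\ell'_i(v)$ (one neighbor advancing each time), and no potential built from $d'_i(v)$ and the neighbor levels drops by a super-constant factor when only a single neighbor is guaranteed to move. The paper's key step (Lemma~\ref{lemma:winwin}) is instead a \emph{parametrized} dichotomy: for any $\Kv>1$, either $w_{i+1}(v)\le w_i(v)-\Vault_i(v)/\Kv$, or (when $\ell'_{i+1}(v)=\ell'_i(v)$) $d'_{i+1}(v)<d'_i(v)/\Kv$. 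The reason is that if the weight drop is below $\Vault_i(v)/\Kv$ then fewer than $d'_i(v)/\Kv$ neighbors paid the full $\Vault_i(v)/d'_i(v)$, so more than $d'_i(v)(1-1/\Kv)$ exhausted their banks and advance. This yields at most $\Kv/\gamma$ ``Condition~B'' and at most $\log_{\Kv} d(v)$ ``Condition~A'' iterations per level, hence a total of $z\bigl(\Kv/\gamma+\log d(v)/\log\Kv\bigr)$; choosing $\Kv=\sqrt{\log\Delta}/\log\log\Delta$ balances the two terms. This $\Kv$-dichotomy \emph{is} the BCS argument you invoke, but it is not compatible with your split at threshold $\Vault_i(v)$: that corresponds to $\Kv=1$, for which the degree side gives no geometric decrease at all. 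Lower the threshold to $\Vault_i(v)/\Kv$ and the rest of your outline goes through.
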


\subsection{Proof of Theorem~\ref{thm:alg}}
\paragraph{Notation.} In the analysis we use
$w_i(v),\ell_i(v),d'_i(v)$ to denote the value of these variables
at the beginning of the
$i$th iteration. 

\noindent \medskip The following claim states an invariant that
Algorithm~\ref{alg} satisfies. 
\begin{claim}\label{claim:inv}
  The following invariant holds in every iteration of the while-loop:
  \begin{align}\label{eq:inv}
    \gamma^{\ell_i(v)} < \frac{w_i(v)}{w_0(v)} \leq \gamma^{\ell_i(v)-1}
  \end{align}
  Hence, (i)~$\Vault_i(v)<w_i(v)$ and (ii)~if $\ell_{i+1}(v)\geq z+1$, then $\frac{w_{i+1}(v)}{w_0(v)}\leq \epsilon'$.
\end{claim}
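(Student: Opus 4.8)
\subsection*{Proof proposal for Claim~\ref{claim:inv}}

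The plan is to establish \eqref{eq:inv} by induction on the iteration index $i$ and then deduce (i) and (ii) with a line of arithmetic each. The whole argument hinges on one elementary fact about the base-$\gamma$ logarithm: since $\gamma\in(0,1)$, both $x\mapsto\log_\gamma x$ and $x\mapsto\gamma^{x}$ are strictly decreasing, so if $r\in(0,1]$ and we set $\ell=1+\floor{\log_\gamma r}$ then $\ell-1\le\log_\gamma r<\ell$, and applying the decreasing map $\gamma^{(\cdot)}$ flips these into $\gamma^{\ell}<r\le\gamma^{\ell-1}$. In other words, \eqref{eq:inv} holds automatically in any iteration where the algorithm recomputes the level from the current weight ratio; what really has to be shown is that it is preserved across the iterations in which the level is left unchanged.

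Before running the induction I would record two per-iteration facts. First, weights are non-increasing: $w_{i+1}(v)\le w_i(v)$, since $w_i(v)$ is only decremented by the nonnegative quantities $\Budget_i(u,v)$ and $\Budget_i(v,u_k)$. Second, feasibility $w_{i+1}(v)\ge 0$, by the vault/bank accounting: $v$'s outgoing offers total exactly $\Vault_i(v)$, so $\sum_u\Budget_i(u,v)\le\sum_u\Deal_i(v,u)=\Vault_i(v)$, and the greedy response rule keeps $\sum_k\Budget_i(v,u_k)\le\Bank_i(v)$, whence $w_{i+1}(v)\ge w_i(v)-\Vault_i(v)-\Bank_i(v)=0$. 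Hence in every iteration the loop actually executes we have $0<w_i(v)\le w_0(v)$, so $w_i(v)/w_0(v)\in(0,1]$ and the logarithm is well-defined. The base case $i=0$ is immediate: $\ell_0(v)=1$, $w_0(v)/w_0(v)=1$, and $\gamma^{1}<1\le\gamma^{0}$ because $\gamma<1$.

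For the inductive step I would assume \eqref{eq:inv} at iteration $i$ and inspect the end of that iteration. If $w_{i+1}(v)>\Vault_i(v)$, the algorithm keeps $\ell_{i+1}(v)=\ell_i(v)$; the lower bound then comes from $w_{i+1}(v)>\Vault_i(v)=w_0(v)\gamma^{\ell_i(v)}$, and the upper bound from $w_{i+1}(v)\le w_i(v)$ combined with the inductive hypothesis $w_i(v)/w_0(v)\le\gamma^{\ell_i(v)-1}$. Otherwise $0<w_{i+1}(v)\le\Vault_i(v)$ and the algorithm recomputes $\ell_{i+1}(v)=1+\floor{\log_\gamma(w_{i+1}(v)/w_0(v))}$, so \eqref{eq:inv} holds by the elementary fact above (and in fact $\ell_{i+1}(v)>\ell_i(v)$, so levels never decrease, although the claim does not need this). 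The only remaining possibility, $w_{i+1}(v)=0$, makes the vertex terminate and hence does not begin a new iteration, so the case split is exhaustive and the invariant propagates.

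Deducing the two consequences is then routine. Part (i) is just the left inequality of \eqref{eq:inv}: $\Vault_i(v)=w_0(v)\gamma^{\ell_i(v)}<w_i(v)$ (which incidentally also gives $\Bank_i(v)>0$). For part (ii), note that a level which was merely carried over stays below $z+1$ (otherwise $v$ would have stopped earlier), so $\ell_{i+1}(v)\ge z+1$ forces the level to have just been recomputed, whence \eqref{eq:inv} applies to $w_{i+1}(v),\ell_{i+1}(v)$ and gives $w_{i+1}(v)/w_0(v)\le\gamma^{\ell_{i+1}(v)-1}\le\gamma^{z}$; finally $z=\ceil{\log_\gamma\epsilon'}\ge\log_\gamma\epsilon'$ together with the monotonicity of $\gamma^{(\cdot)}$ yields $\gamma^{z}\le\epsilon'$. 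I do not expect a genuine obstacle here — the argument is bookkeeping — but the two spots that need care are keeping the inequality directions straight (the logarithm base is below $1$) and matching the case split precisely to the algorithm's ``update level'' and ``terminate'' tests.
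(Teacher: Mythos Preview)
Your proof is correct. The paper states Claim~\ref{claim:inv} without proof, so there is no argument to compare against; your induction on $i$, with the case split matching the algorithm's level-update test and the termination test, is exactly the natural verification and handles all cases cleanly.
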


This invariant of Claim~\ref{claim:inv} implies, among other things,
that every vertex that decides to join the vertex cover is
$\eps'$-tight. This property, together with the fact that the set of
vertices that join the vertex cover constitute a vertex cover leads to
the proof that Algorithm~\ref{alg} is a $(2+\eps)$-approximation
algorithm. An analogous lemma and its proof also appears
in~\cite{Bar-YehudaCS17}. We remark that termination of the algorithm
is implied by the upper bound on the number of iterations of the
while-loop proved in the sequel.
\begin{lemma}[{\cite[Lemma 3.2]{Bar-YehudaCS17}}]
  For every $\epsilon,\gamma \in (0,1)$, upon termination Algorithm~\ref{alg} computes a
  $(2+\epsilon)$-approximate solution to MWVC.
\end{lemma}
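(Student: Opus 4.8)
The plan is to derive the lemma from the local-ratio template (Theorem~\ref{thm:lr}). Concretely, I want to exhibit a $G$-valid edge-weight function $\delta$, show that the set $C$ of vertices that the algorithm declares \InCover{} is contained in $S_\delta$, and show that $C$ is a vertex cover; then Theorem~\ref{thm:lr} immediately gives $\sum_{v\in C}w(v)\le \sum_{v\in S_\delta}w(v)\le(2+\eps)OPT$, and since $C$ is a cover this is the claimed $(2+\eps)$-approximation. (Termination of the algorithm is not re-proved here; it follows from the bound on the number of while-loop iterations established in the sequel.)

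Define $\delta$ to be the function accumulated by the algorithm: for $e=\{u,v\}$ set $\delta(e)=\sum_i(\Budget_i(u,v)+\Budget_i(v,u))$, the sum ranging over the iterations in which $e$ is still present for both endpoints. In each such iteration both endpoints of $e$ decrease their weight by precisely the corresponding term, so an easy induction on $i$ gives $w_{i+1}(v)=w_0(v)-\sum_{e\ni v}\delta_{\le i}(e)$, where $\delta_{\le i}$ denotes the value accumulated through iteration $i$. It remains to check $w_{i+1}(v)\ge 0$, equivalently the $G$-validity of the final $\delta$. In iteration $i$ the responses $v$ receives to its own offers total at most $\sum_{u\in N'_i(v)}\Deal_i(v,u)=d'_i(v)\cdot\Vault_i(v)/d'_i(v)=\Vault_i(v)$, using the offer-bounds-response constraint $\Budget_i(u,v)\le\Deal_i(v,u)$; and the responses $v$ hands out total at most $\Bank_i(v)$ by construction of the greedy allocation. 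Hence $w_{i+1}(v)\ge w_i(v)-\Vault_i(v)-\Bank_i(v)=0$. Here $\Bank_i(v)=w_i(v)-\Vault_i(v)>0$ is guaranteed by Claim~\ref{claim:inv}(i), which also makes the greedy allocation well defined; so $\delta$ is $G$-valid and the template's bound $\sum_{v\in S_\delta}w(v)\le(2+\eps)OPT$ applies.

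Next, let $C$ be the set of vertices returning \InCover. Such a vertex $v$ halts in an iteration where either $w_{i+1}(v)=0$ or $\ell_{i+1}(v)\ge z+1$. Once $v$ has sent $(v,cover)$ and halted it issues no further offers or responses, and each remaining neighbor deletes $v$ upon receipt and thereafter sends it nothing, so no edge incident to $v$ ever changes again; therefore the final $\delta$ satisfies $\tilde w_\delta(v)=w_0(v)-w_{i+1}(v)$. In the first case this equals $w_0(v)$, and in the second Claim~\ref{claim:inv}(ii) gives $w_{i+1}(v)/w_0(v)\le\eps'$; either way $w(v)-\tilde w_\delta(v)\le\eps'w(v)$, i.e.\ $v\in S_\delta$. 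Thus $C\subseteq S_\delta$.

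Finally, $C$ is a vertex cover: a vertex returns \NotInCover{} only in an iteration where its remaining-neighbor set is empty, and a neighbor leaves that set only upon receipt of a $(\cdot,cover)$ message, i.e.\ only when that neighbor joins $C$. Hence for any edge $\{u,v\}$, if $v\notin C$ then $v$ returned \NotInCover, which forced $v$ to have deleted $u$, which means $u$ sent $(u,cover)$ and so $u\in C$. Therefore $C$ — and hence its superset $S_\delta$ — is a vertex cover, and Theorem~\ref{thm:lr} completes the proof. I expect the one delicate point to be the accounting in the $G$-validity step: keeping the weight spent on outgoing offers (charged to the $\Vault$) strictly separate from the weight spent on responses (charged to the $\Bank$), while the vault is routed only to the lowest-level neighbors $N'_i(v)$; everything else is bookkeeping together with the two consequences of Claim~\ref{claim:inv}.
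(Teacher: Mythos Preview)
Your proof is correct and follows exactly the approach the paper sketches (and attributes to~\cite{Bar-YehudaCS17}): exhibit the $G$-valid $\delta$ accumulated from the $\Budget$ values, use Claim~\ref{claim:inv} to show every \InCover{} vertex is $\eps'$-tight (hence $C\subseteq S_\delta$), observe that $C$ is a vertex cover because \NotInCover{} is returned only after all neighbors have joined $C$, and then invoke Theorem~\ref{thm:lr}. The paper itself gives only this outline and defers the details to~\cite{Bar-YehudaCS17}; you have filled them in faithfully, including the key vault/bank separation that yields $w_{i+1}(v)\ge 0$.
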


In the following lemma we show that, for every vertex $v$ and every iteration of the while-loop,  either
many of $v$'s neighbors of the smallest level have increased their level or $v$'s  weight has
decreased significantly.
\begin{lemma}\label{lemma:winwin}
  Let $\Kv>1$. Let $i$ be an iteration of the while-loop in the
  execution of Algorithm~\ref{alg} by vertex $v$ in which $v$ does not
  join the cover.  At least one of following conditions must hold:
\begin{enumerate}
\item  At least $d'_i(v)(1-1/\Kv)$ of the neighbors of
  $v$ of the lowest level have increased their level. Formally, If $\ell'_{i+1}(v)=\ell'_i(v)$, then $d'_{i+1}(v) < d'_i(v)/K$.\label{condA}
\item $w_{i+1}(v) \leq w_i(v)-w_0(v)\gamma^{\ell_i(v)}/\Kv$.\label{condB}
\end{enumerate}
\end{lemma}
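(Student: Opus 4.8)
The plan is to prove the contrapositive: assuming the negation of Condition~\ref{condA} --- that is, $\ell'_{i+1}(v)=\ell'_i(v)$ together with $d'_{i+1}(v)\ge d'_i(v)/\Kv$ --- I will derive Condition~\ref{condB}. (If $v$ becomes isolated at the end of iteration $i$, then $d'_{i+1}(v)=0<d'_i(v)/\Kv$ and Condition~\ref{condA} holds, since $d'_i(v)\ge 1$; hence I may assume $N_{i+1}(v)\neq\emptyset$, so that $d'_{i+1}(v)$ is well defined.)

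The heart of the argument is the following observation about any lowest-level neighbor $u\in N'_i(v)$ that is still present in $N_{i+1}(v)$ and whose level did not change, i.e.\ $\ell_{i+1}(u)=\ell_i(u)=\ell'_i(v)$: such a $u$ must have answered $v$'s offer in full, $\Budget_i(u,v)=\Deal_i(v,u)=\Vault_i(v)/d'_i(v)$. To see this I would first combine the level-update rule of Algorithm~\ref{alg} with the invariant of Claim~\ref{claim:inv}: since $u$ did not join the cover ($w_{i+1}(u)\neq 0$) and its level stayed put, it must be that $w_{i+1}(u)>\Vault_i(u)=w_0(u)\gamma^{\ell_i(u)}$, because otherwise $\log_\gamma(w_{i+1}(u)/w_0(u))\ge \ell_i(u)$ would force $\ell_{i+1}(u)\ge \ell_i(u)+1$. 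Now $u$'s responses to incoming offers are drawn greedily from its bank $\Bank_i(u)=w_i(u)-\Vault_i(u)$; if $u$ had exhausted its bank we would get $w_{i+1}(u)\le w_i(u)-\Bank_i(u)=\Vault_i(u)$, contradicting the previous line. Hence $u$ did not exhaust its bank, and the greedy allocation rule then guarantees that \emph{every} offer $u$ received was answered in full --- in particular the offer from $v$, which $u$ did receive because $u\in N'_i(v)$.

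Given this, let $A\subseteq N'_i(v)$ be the set of such surviving, level-unchanged neighbors. Since levels never decrease and neighborhoods never grow, one checks that $A=N'_{i+1}(v)$, so $|A|=d'_{i+1}(v)\ge d'_i(v)/\Kv$. Then, using that $v$'s weight update only subtracts nonnegative quantities, $w_{i+1}(v)\le w_i(v)-\sum_{u\in N'_i(v)}\Budget_i(u,v)\le w_i(v)-|A|\cdot \Vault_i(v)/d'_i(v)\le w_i(v)-\Vault_i(v)/\Kv = w_i(v)-w_0(v)\gamma^{\ell_i(v)}/\Kv$, which is exactly Condition~\ref{condB}. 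I expect the middle paragraph to be the only delicate point: one has to reconcile the level-update rule, the invariant $\gamma^{\ell_i(v)}<w_i(v)/w_0(v)\le\gamma^{\ell_i(v)-1}$, and the greedy-from-bank allocation in order to conclude that a neighbor which fails to level up must have fully honored $v$'s offer; the rest is straightforward bookkeeping of the weight decrement.
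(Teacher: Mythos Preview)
Your proposal is correct and follows essentially the same route as the paper's proof: assume Condition~\ref{condA} fails, argue that each surviving lowest-level neighbor $u$ did not exhaust its bank (since its level did not increase and it did not join the cover), hence fully honored $v$'s offer, and then sum these full responses to obtain Condition~\ref{condB}. Your write-up is in fact more careful than the paper's, which conflates $v$ and the neighbor $u$ in places; your explicit identification $A=N'_{i+1}(v)$ and your handling of the isolated-vertex edge case are nice touches absent from the original.
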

\begin{proof}
  Assume that $\ell'_{i+1}(v)=\ell'_i(v)$ and
  $d'_{i+1}(v) \geq d'_i(v)/K$.  Note that if the level of a vertex
  remains unchanged (i.e., $\ell'_{i+1}(v)=\ell'_i(v)$), then either
  $w_{i+1}(v)=0$ or $w_{i+1}(v)> \Vault_i(v)$.  If $w_{i+1}(v)=0$,
  then $v$ joins the cover, a contradiction.  If
  $w_{i+1}(v)> \Vault_i(v)$, then the bank was not exhausted and
  $\Budget_i(u,v) = \Deal_i(v,u)$.  To conclude, at least
  $d'_i(v)/\Kv$ vertices $u \in N'_i(v)$ responded with
  $\Budget_i(u,v) = \Deal_i(v,u)$. This implies that
\begin{eqnarray*}
w_i(v) - w_{i+1}(v) &\geq& \frac{d'_i(v)}{\Kv} \cdot\frac{\Vault_i(v)}{d'_i(v)}\\
&=& w_0(v)\gamma^{\ell_i(v)}/\Kv.
\end{eqnarray*}
\end{proof}
\begin{lemma}\label{lemma:iterations}
For every $\gamma\in (0,1)$ and $\Kv>1$, the number of iterations of the while-loop for every vertex $v$ is bounded by:
  \begin{align*}
z\cdot \parentheses{\frac{\Kv}{\gamma} + \frac{\log{d(v)}}{\log{\Kv}}}.
  \end{align*}
\end{lemma}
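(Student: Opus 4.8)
The plan is to charge every iteration in which $v$ does not join the cover to one of two budgets supplied by Lemma~\ref{lemma:winwin}, and then to add the single terminal iteration in which $v$ does join. Two monotonicity facts drive the argument. First, $w_i(v)$ is nonincreasing in $i$, so by Claim~\ref{claim:inv} the level $\ell_i(v)$ is nondecreasing; since any vertex that reaches level $z+1$ joins the cover and is deleted from all neighbor lists, every iteration that is actually executed has $\ell_i(v)\in\{1,\dots,z\}$. Second, each individual vertex level is nondecreasing and deleting a neighbor can only raise a minimum, so $\ell'_i(v)$ is also nondecreasing and lies in $\{1,\dots,z\}$; moreover, along any maximal block of consecutive iterations on which $\ell'_i(v)$ equals a fixed value $L$, the set $N'_i(v)$ can only lose members (a neighbor leaves it by leveling up or by being removed, and none can enter it), so $d'_i(v)$ is nonincreasing on such a block.

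Next I would bound the iterations satisfying condition~\ref{condB} of Lemma~\ref{lemma:winwin}, which I call \emph{weight steps}. Fix a value $\ell$; by monotonicity the iterations with $\ell_i(v)=\ell$ form one contiguous block, and throughout it Claim~\ref{claim:inv} confines $w_i(v)$ to $(w_0(v)\gamma^{\ell},\,w_0(v)\gamma^{\ell-1}]$, an interval of length less than $w_0(v)\gamma^{\ell-1}$. Each weight step in this block lowers $w_i(v)$ by at least $w_0(v)\gamma^{\ell}/\Kv$, so the block contains fewer than $\Kv/\gamma$ weight steps; summing over the at most $z$ possible values of $\ell$ gives fewer than $z\cdot\Kv/\gamma$ weight steps overall.

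Finally I would count the remaining iterations, in which $v$ does not join the cover and condition~\ref{condB} fails, so condition~\ref{condA} applies. Split these into iterations where $\ell'_i(v)$ strictly increases---at most $z-1$, since $\ell'_i(v)$ is nondecreasing with range in $\{1,\dots,z\}$---and iterations where $\ell'_i(v)$ stays put, for which condition~\ref{condA} gives $d'_{i+1}(v)<d'_i(v)/\Kv$. Grouping the latter by the maximal blocks of constant $\ell'_i(v)$: on each block $d'_i(v)$ is a positive integer that never increases, starts at most $d(v)$, and is divided by more than $\Kv$ at each such iteration, so each block holds at most $\log d(v)/\log\Kv$ of them, and there are at most $z$ blocks. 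Adding the at most one iteration in which $v$ joins the cover---the only iteration not covered by Lemma~\ref{lemma:winwin}---and collecting all the counts yields $z\cdot\Kv/\gamma + z\cdot\log d(v)/\log\Kv$ up to an additive $O(z)$ term, which is absorbed since $\Kv/\gamma>1$, giving the claimed bound. I expect the main obstacle to be exactly this bookkeeping: conditions~\ref{condA} and~\ref{condB} are governed by two different and largely independent nondecreasing parameters---$v$'s own level, which fixes the size of a weight step, and the neighbors' minimum level, which controls the degree drop---so the two counting arguments must be run against two different partitions of the iteration sequence and then reconciled, while also verifying that $d'_i(v)$ cannot rebound within a block of constant $\ell'_i(v)$.
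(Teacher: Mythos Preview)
Your proposal is correct and follows essentially the same approach as the paper: charge each non-terminal iteration to either Condition~\ref{condA} or Condition~\ref{condB} of Lemma~\ref{lemma:winwin}, bound the Condition~\ref{condB} iterations by $K/\gamma$ per value of $\ell_i(v)$ using Claim~\ref{claim:inv}, and bound the Condition~\ref{condA} iterations by $\log_K d(v)$ per value of $\ell'_i(v)$. The paper's proof is terser and simply says ``per level'' without distinguishing the two level parameters $\ell_i(v)$ and $\ell'_i(v)$; your explicit separation of the two partitions, together with the monotonicity check that $d'_i(v)$ cannot rebound within a block of constant $\ell'_i(v)$, is exactly the bookkeeping that makes the argument rigorous, but it is not a different method.
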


  \begin{proof}
  The number of levels is bounded by $z$. Hence it suffices to prove
  that the number of rounds per level is at most
  $K/\gamma + \log_K d(v)$.  Indeed, the number of rounds that
  satisfy Condition~\ref{condA} per level is bounded by $\log_K d(v)$ because
  $d'_i(v)$ is divided by at least $K$ in each such iteration.

  We now bound the number of iterations that satisfies 
  Condition~\ref{condB} per level. By Claim~\ref{claim:inv},
  $w_0(v)\cdot \gamma^{\ell_i(v)-1} \geq w_i(v)$. Hence, the number of
  iterations that satisfies Condition~\ref{condB} is bounded by
  $K/\gamma$, as required, and the lemma follows.
\end{proof}

We now prove Theorem~\ref{thm:alg}.
\begin{proof}
First, consider the case where $\Delta \le 16$. We set $\gamma=0.5$ (hence, $z=O(\log\epsilon^{-1})$) and $\Kv=2$. Lemma~\ref{lemma:iterations} immediately shows that the termination time is $O(\log\epsilon^{-1})$.
Next, assume that $\Delta > 16$ (thus hereafter: $\log\log\Delta>
2$, $\log\Delta / \log\log\Delta > 2$, $1/\sqrt{\log \Delta}<1/2$, and
$\sqrt{\log \Delta}/{\log \log \Delta}>1$).	
We set $\gamma = 1/\sqrt{\log \Delta}$ and $\Kv=\sqrt{\log \Dv} / \log \log \Dv$. 
Now we can express the running time as:
\begin{align*}
z\cdot \parentheses{\frac{\Kv}{\gamma} + \frac{\log{d(v)}}{\log{\Kv}}} \le	z\parentheses{\Kv\gamma^{-1} + \frac{\log{\Dv}}{\log{\Kv}}} = z\parentheses{\frac{\log \Dv}{\log \log \Dv}  + \frac{\log{\Dv}}{0.5\log \log \Dv - \log \log \log \Dv}} = O\parentheses{\frac{z\log \Dv}{\log \log \Dv}}.
\end{align*}

Let us analyze the running time according to the values of
$\epsilon$. First, consider the case where $\epsilon^{-1}=\log^{O(1)}
\Dv$. Since $\eps\in (0,1)$, it follows that $\eps'=\Theta(\eps)$. We get that 
$z < 1+\log_\gamma \eps' = O(1 + \log \epsilon^{-1} / \log \gamma^{-1}) =
O(1+\log \log \Dv / \log \log \Dv ) = O(1)$. Thus, the total running
time for this case is $O(\log \Dv / \log \log \Dv)$.  Next we consider
the complementary case, where $\epsilon^{-1}=\log^{\omega(1)}
\Dv$. This means that $\log \epsilon^{-1} = \omega(\log \log
\Dv)$. Therefore, $z = O(\log \epsilon^{-1} / \log \gamma^{-1}) =
O(\log \epsilon^{-1} / \log \log \Dv)$, and the running time for the
second case is given by $O(\log \epsilon^{-1} \log \Dv / \log^2 \log
\Dv)$. Thus, we may express the running time of our algorithm
asymptotically as:
$$O(\log \Dv / \log \log \Dv + \log \epsilon^{-1} \log \Dv / \log^2 \log \Dv).$$

The number of rounds is bounded as required, and the theorem follows.
\end{proof}

\section{An algorithm without knowing $\Delta$}\label{sec:noDelta}
The bound on the round complexity in Theorem~\ref{thm:alg} assumes
that every vertex knows the maximum degree $\Delta$ (or a polynomial
upper bound on $\Delta$). This is required in order to determine the value of $\gamma$.  
In this section we consider the setting in which $\Delta$ is unknown to the vertices. 

Note that the analysis in Lemma~\ref{lemma:iterations} is per
a vertex. Hence, in the analysis of the round complexity, we may use a
different $\Kv$ per a vertex. Let $\Kv_v$ denote the value of $\Kv$ that is
used in the analysis for bounding the round complexity of $v$ . 

\medskip\noindent
We propose two alternatives for the setting of unknown maximum degree, as follows:
\begin{enumerate}
\item In the first setting, we simply set $\gamma=1/2$ in the
  algorithm. For the analysis, we consider $\Kv_v= \frac{\log
    d(v)}{\log\log d(v)}$, where $d(v)$ denotes the degree of $v$.
  Plugging in these parameters in Lemma~\ref{lemma:iterations} gives a
  round complexity of $O\parentheses{\frac{\log\epsilon^{-1}\log
    d(v)}{\log\log d(v)}}$.

\item
 For any $q=O(1)$, we can set
$\gamma=\epsilon^{1/2q}$ (hence, $z=O(1)$).  An analysis with $K_v = \frac{\gamma \log d(v)}{\log\log d(v)}$ shows that $v$ terminates in the optimal $O\parentheses{\log d(v) / \log\log d(v)}$
rounds for any $\epsilon > (\log \Delta)^{-q}$. 
This is because
$$\Kv_v=\frac{\epsilon^{1/2q} \log d(v)}{\log\log d(v)} > \frac{\log^{-0.5} d(v)\log d(v)}{\log\log d(v)} = \frac{\log^{0.5} d(v)}{\log\log d(v)}.$$
That allows us to express the running time as
 $$z\parentheses{\Kv_v\gamma^{-1} +
  \log{d(v)}/\log{\Kv_v}} =O\parentheses{\log d(v) / \log\log d(v)}.$$
\end{enumerate}

\bibliographystyle{alpha}

\begin{thebibliography}{{\AA}FP{\etalchar{+}}09}

\bibitem[{\AA}FP{\etalchar{+}}09]{AstrandFPRSU09}
Matti {\AA}strand, Patrik Flor{\'{e}}en, Valentin Polishchuk, Joel Rybicki,
  Jukka Suomela, and Jara Uitto.
\newblock A local 2-approximation algorithm for the vertex cover problem.
\newblock In {\em Distributed Computing, 23rd International Symposium, {DISC}
  2009, Elche, Spain, September 23-25, 2009. Proceedings}, pages 191--205,
  2009.

\bibitem[{\AA}S10]{AstrandS10}
Matti {\AA}strand and Jukka Suomela.
\newblock Fast distributed approximation algorithms for vertex cover and set
  cover in anonymous networks.
\newblock In {\em {SPAA} 2010: Proceedings of the 22nd Annual {ACM} Symposium
  on Parallelism in Algorithms and Architectures, Thira, Santorini, Greece,
  June 13-15, 2010}, pages 294--302, 2010.

\bibitem[BCS17]{Bar-YehudaCS17}
Reuven Bar{-}Yehuda, Keren Censor{-}Hillel, and Gregory Schwartzman.
\newblock A distributed {(2} + {\(\epsilon\)})-approximation for vertex cover
  in o(log {\(\Delta\)} / {\(\epsilon\)} log log {\(\Delta\)}) rounds.
\newblock {\em J. {ACM}}, 64(3):23:1--23:11, 2017.

\bibitem[BE81]{BarYehudaE81}
Reuven Bar{-}Yehuda and Shimon Even.
\newblock A linear-time approximation algorithm for the weighted vertex cover
  problem.
\newblock {\em J. Algorithms}, 2(2):198--203, 1981.

\bibitem[BE85]{BarYehudaE1985}
Reuven Bar{-}Yehuda and Shimon Even.
\newblock A local-ratio theorem for approximating the weighted vertex cover
  problem.
\newblock {\em North-Holland Mathematics Studies}, 109:27--45, 1985.

\bibitem[BEPS12]{BarenboimEPS12}
Leonid Barenboim, Michael Elkin, Seth Pettie, and Johannes Schneider.
\newblock The locality of distributed symmetry breaking.
\newblock In {\em 53rd Annual {IEEE} Symposium on Foundations of Computer
  Science, {FOCS} 2012, New Brunswick, NJ, USA, October 20-23, 2012}, pages
  321--330, 2012.

\bibitem[CLRS09]{Cormen2009}
Thomas~H. Cormen, Charles~E. Leiserson, Ronald~L. Rivest, and Clifford Stein.
\newblock {\em Introduction to Algorithms, Third Edition}.
\newblock The MIT Press, 3rd edition, 2009.

\bibitem[GJ79]{GareyJ79}
M.~R. Garey and David~S. Johnson.
\newblock {\em Computers and Intractability: {A} Guide to the Theory of
  NP-Completeness}.
\newblock W. H. Freeman, 1979.

\bibitem[GKP08]{GrandoniKP08}
Fabrizio Grandoni, Jochen K{\"{o}}nemann, and Alessandro Panconesi.
\newblock Distributed weighted vertex cover via maximal matchings.
\newblock {\em {ACM} Transactions on Algorithms}, 5(1), 2008.

\bibitem[HKP01]{HanckowiakKP01}
Michal Hanckowiak, Michal Karonski, and Alessandro Panconesi.
\newblock On the distributed complexity of computing maximal matchings.
\newblock {\em {SIAM} J. Discrete Math.}, 15(1):41--57, 2001.

\bibitem[Hoc82]{Hochbaum82}
Dorit~S. Hochbaum.
\newblock Approximation algorithms for the set covering and vertex cover
  problems.
\newblock {\em {SIAM} J. Comput.}, 11(3):555--556, 1982.

\bibitem[Kar72]{Karp72}
Richard~M. Karp.
\newblock Reducibility among combinatorial problems.
\newblock In {\em Proceedings of a symposium on the Complexity of Computer
  Computations, held March 20-22, 1972, at the {IBM} Thomas J. Watson Research
  Center, Yorktown Heights, New York.}, pages 85--103, 1972.

\bibitem[KMW06]{KuhnMW06}
Fabian Kuhn, Thomas Moscibroda, and Roger Wattenhofer.
\newblock The price of being near-sighted.
\newblock In {\em Proceedings of the Seventeenth Annual {ACM-SIAM} Symposium on
  Discrete Algorithms, {SODA} 2006, Miami, Florida, USA, January 22-26, 2006},
  pages 980--989, 2006.

\bibitem[KMW16]{KuhnMW16}
Fabian Kuhn, Thomas Moscibroda, and Roger Wattenhofer.
\newblock Local computation: Lower and upper bounds.
\newblock {\em J. {ACM}}, 63(2):17:1--17:44, 2016.

\bibitem[KR08]{KhotR08}
Subhash Khot and Oded Regev.
\newblock Vertex cover might be hard to approximate to within 2-epsilon.
\newblock {\em J. Comput. Syst. Sci.}, 74(3):335--349, 2008.

\bibitem[KVY94]{KhullerVY94}
Samir Khuller, Uzi Vishkin, and Neal~E. Young.
\newblock A primal-dual parallel approximation technique applied to weighted
  set and vertex covers.
\newblock {\em J. Algorithms}, 17(2):280--289, 1994.

\bibitem[KY09]{Koufogiannakis:2009:DPA:1582716.1582746}
Christos Koufogiannakis and Neal~E. Young.
\newblock Distributed and parallel algorithms for weighted vertex cover and
  other covering problems.
\newblock In {\em Proceedings of the 28th ACM Symposium on Principles of
  Distributed Computing}, PODC '09, pages 171--179, New York, NY, USA, 2009.
  ACM.

\bibitem[KY11]{KoufogiannakisY09}
Christos Koufogiannakis and Neal~E. Young.
\newblock Distributed algorithms for covering, packing and maximum weighted
  matching.
\newblock {\em Distributed Computing}, 24(1):45--63, 2011.

\bibitem[NJ75]{NemhauserT75}
George~L. Nemhauser and Leslie E.~Trotter Jr.
\newblock Vertex packings: Structural properties and algorithms.
\newblock {\em Math. Program.}, 8(1):232--248, 1975.

\bibitem[Pet16]{PettiePersonal}
Seth Pettie.
\newblock Personal communication.
\newblock 2016.

\bibitem[PR01]{PanconesiR01}
Alessandro Panconesi and Romeo Rizzi.
\newblock Some simple distributed algorithms for sparse networks.
\newblock {\em Distributed Computing}, 14(2):97--100, 2001.

\bibitem[PS09]{PolishchukS09}
Valentin Polishchuk and Jukka Suomela.
\newblock A simple local 3-approximation algorithm for vertex cover.
\newblock {\em Inf. Process. Lett.}, 109(12):642--645, 2009.

\bibitem[Sol18]{Solomon18}
Shay Solomon.
\newblock Local algorithms for bounded degree sparsifiers in sparse graphs.
\newblock In {\em {ITCS}}, volume~94 of {\em LIPIcs}, pages 52:1--52:19.
  Schloss Dagstuhl - Leibniz-Zentrum fuer Informatik, 2018.

\end{thebibliography}
\newcommand{\etalchar}[1]{$^{#1}$}

\end{document}